\newif\ifapx
\newif\ifarxiv
\newcommand{\ourmaintitle}{Estimating Mutual Information via Geodesic $k$NN}
\newcommand{\oururl}{\url{https://github.com/a-marx/geodesic-mi}}
\newcommand{\arxivid}{arXiv:2110.13883}
\newcommand{\ksg}{\textsc{KSG}\xspace}
\newcommand{\ourmethod}{\textsc{G-KSG}\xspace}
\newcommand{\gknn}{\textsc{gKNN}\xspace}
\newcommand{\lnn}{\textsc{LNN}\xspace}
\newcommand{\fung}{\ensuremath{g}\xspace}
\DeclarePairedDelimiterX{\infdivx}[2]{(}{)}{%
  #1\;\delimsize\|\;#2%
}
\newcommand{\CSS}{\ensuremath{\mathit{css}}\xspace}
\newcommand{\CS}{\ensuremath{\mathit{cs}}\xspace}
\newcommand{\dgf}{\ensuremath{d_G}\xspace}
\newcommand{\Xt}{\ensuremath{\tilde{\bm{X}}}\xspace}
\newcommand{\Yt}{\ensuremath{\tilde{\bm{Y}}}\xspace}
\newcommand{\Zt}{\ensuremath{\tilde{\bm{Z}}}\xspace}
\newcommand{\Xg}{\ensuremath{\bm{X}}\xspace}
\newcommand{\Yg}{\ensuremath{\bm{Y}}\xspace}
\newcommand{\Zg}{\ensuremath{\bm{Z}}\xspace}
\newcommand{\Xn}{\ensuremath{\bm{E}_X}\xspace}
\newcommand{\Yn}{\ensuremath{\bm{E}_Y}\xspace}
\newcommand{\xg}{\ensuremath{\bm{x}}\xspace}
\newcommand{\yg}{\ensuremath{\bm{y}}\xspace}
\newcommand{\zg}{\ensuremath{\bm{z}}\xspace}
\newcommand{\Indep}{\mathop{\perp\!\!\!\perp}\nolimits}
\newtheorem{model}{Model}
\tikzset{node/.style={black, draw=black, circle, minimum size=1cm, scale=0.7}} 
\tikzset{snode/.style={black, draw=black, fill=lightgray, circle, minimum size=0.6cm, scale=0.7}} 
\tikzset{dummy/.style={black, draw=black, circle, minimum size=0.55cm, scale=0.7}} 
\tikzset{latent/.style={black, draw=black, fill=lightgray, circle, minimum size=1cm, scale=0.7}} 
\tikzset{causes/.style={->,very thick,  color=black}} 
\tikzset{causesxor/.style={->,very thick, dashed, color=black}} 
\tikzset{causesxoro/.style={o->,very thick, dashed, color=black}} 
\tikzset{connected/.style={o-o,very thick, color=black}} 
\tikzset{connectedd/.style={o-o,very thick, dashed,  color=black}} 
\tikzset{ocauses/.style={o->,very thick,  color=black}} 
\tikzset{confounder/.style={<->,very thick,  color=black}} 
\tikzset{confounderxor/.style={<->,very thick, dashed, color=black}} 
\tikzset{confounderl/.style={<->,very thick,  color=black, bend left=45}} 
\tikzset{confounderr/.style={<->,very thick,  color=black, bend right=45}} 
\tikzstyle{flatlabel}  = [above, font = \tiny, inner sep = 1pt, text = black]
\tikzstyle{flatlabelb}  = [below, font = \tiny, inner sep = 1pt, text = black]
\tikzstyle{slopelabel}  = [sloped, above, font = \tiny, inner sep = 1pt, text = black]
\tikzstyle{slopelabelb}  = [sloped, below, font = \tiny, inner sep = 1pt, text = black]
\begin{document}
\setlength{\pdfpagewidth}{8.5in}
\setlength{\pdfpageheight}{11in}

\title{\ourmaintitle}
\author{Alexander Marx \thanks{Department of Computer Science, ETH Zurich and ETH AI Center, Zurich, Switzerland. alexander.marx@ai.ethz.ch}
\and Jonas Fischer \thanks{Max Planck Institute for Informatics, Saarbr\"ucken, Germany. fischer@mpi-inf.mpg.de}}

\date{}

\maketitle

\fancyfoot[R]{\scriptsize{Copyright \textcopyright\ 2022 by SIAM\\
Unauthorized reproduction of this article is prohibited.}}
   
\begin{abstract}
\small\baselineskip=9pt 
Estimating mutual information (MI) between two continuous random variables $X$ and $Y$ allows to capture non-linear dependencies between them, non-parametrically. As such, MI estimation lies at the core of many data science applications. Yet, robustly estimating MI for high-dimensional $X$ and $Y$ is still an open research question.

In this paper, we formulate this problem through the lens of manifold learning. That is, we leverage the common assumption that the information of $X$ and $Y$ is captured by a low-dimensional manifold embedded in the observed high-dimensional space and transfer it to MI estimation. As an extension to state-of-the-art $k$NN estimators, we propose to determine the $k$-nearest neighbors via geodesic distances on this manifold rather than from the ambient space, which allows us to estimate MI even in the high-dimensional setting. An empirical evaluation of our method, \ourmethod, against the state-of-the-art shows that it yields good estimations of MI in classical benchmark and manifold tasks, even for high dimensional datasets, which none of the existing methods can provide.

\end{abstract}

\section{Introduction}
\label{sec:intro}

Quantifying the strength of a dependence between two continuous random variables is an essential task in data science~\cite{wang:09:divergence}. Due to its non-parametric nature, and hence its ability to measure complex non-linear dependencies, mutual information is ideal for this task~\cite{cover:06:elements}, which is why it is routinely applied for challenging settings such as gene network inference. 

Given two multidimensional continuous random variables $\Xg \in \mathbb{R}^{d_{\Xg}}$ and $\Yg \in \mathbb{R}^{d_{\Yg}}$, mutual information 
\begin{equation}
\label{eq:mi}
I(\Xg;\Yg) = h(\Xg) + h(\Yg) - h(\Xg,\Yg)
\end{equation}
can be expressed as sum of differential entropies
\begin{equation}
h(\Xg) = - \int_{\mathbb{R}^{d_{\Xg}}} f_{\Xg}(\xg) \log f_{\Xg}(\xg) d \xg \, ,
\end{equation}
where $\log$ refers to the natural logarithm. Although the differential entropy of a random variable can be negative, the chain rule does still apply for differential entropy, and hence $I(\Xg;\Yg) \ge 0$ with equality if and only if $\Xg$ is independent of $\Yg$~\cite{cover:06:elements}.

In an ideal scenario, where we are given an iid sample $(\xg_i, \yg_i)_{i=1,\dots,n} {\sim} f_{\Xg \Yg}$, and an unbiased estimator $\hat{f}_{\Xg \Yg}$, we could simply estimate $I(\Xg;\Yg)$ by individually estimating the differential entropies as $\hat{h}(\Xg) = - \frac{1}{n} \sum_{i=1}^n \log \hat{f}_{\Xg}(\xg_i)$.
In practice, such an unbiased density estimate is usually not available and state-of-the-art approaches instead resort to $k$NN based estimates of $\hat{h}(\Xg)$~\cite{frenzel:07:fp,kozachenko:87:firstnn}. Simply put, those methods estimate the log-density locally around each point, e.g.~by enclosing all its $k$NNs into a unit ball and computing its volume~\cite{grassberger:85:pre-ksg,kozachenko:87:firstnn}. 

This approach, i.e.~estimating each of the entropy terms individually with the same $k$, was, however, shown to induce a bias since the volume-related correction terms do not cancel~\cite{kraskov:04:ksg}.
To correct for this bias, Kraskov, St{\"{o}}gbauer, and Grassberger (\ksg)~\cite{kraskov:04:ksg} suggested to determine the distance to the $k$th neighbor only on the joint space and retrospectively count the data points falling within this region in $\Xg$ and $\Yg$. 
Other approaches try to reduce this bias by using different geometries that more tightly enclose the $k$NNs to better model the local densities~\cite{gao:15:strongly-dependent,gao:16:kernelandnn,lord:18:gknn,lu:20:gknn-boosting}.
Yet, none of these approaches has been successfully applied to high-dimensional data, which is exactly the setting we are interested in.

To estimate MI on high-dimensional data, we build upon the manifold assumption~\cite{cayton:05:manifold-learning-algorithms}. This common assumption in machine learning states that high-dimensional data resides on a low-dimensional manifold embedded in the ambient space. Under this assumption, we propose \ourmethod, which instantiates the \ksg estimator by determining the nearest neighbors via geodesic distances, i.e.~the shortest path between two points on the manifold they reside on. To estimate geodesic distances, we make use of and extend a recent proposal for manifold learning, called Geodesic Forests~\cite{madhyastha:20:gf}, which is an unsupervised random forest based on sparse linear projections.
As such, our method is well suited to estimate local densities and therewith mutual information on high-dimensional data implementing the manifold assumption.
Our main contributions are, we
\begin{itemize}
	\item establish a formal connection between manifold learning and mutual information estimation, for which we derive identifiability results in Sec.~\ref{sec:problem},
	\item propose \ourmethod, an instantiation of \ksg using geodesic distances, which we approximate via Geodesic Forests~\cite{cayton:05:manifold-learning-algorithms}, as explained in Sec~\ref{sec:geodesic},
	\item derive a locally adjusted dissimilarity measure, as well as a more efficient, $\mathcal{O}(n)$, split criterium for unsupervised forests in Sec~\ref{sec:geodesic}, and
	\item provide an extensive empirical evaluation of \ourmethod in Sec.~\ref{sec:exps}.
\end{itemize}

Before that, we discuss related work.%
\ifarxiv
\else
\!\footnote{Supplementary Material is published on \arxivid.}
\fi

\section{Related Work}\label{sec:related}

Mutual information estimation is a well studied problem for discrete, continuous and even discrete-continuous mixture data~\cite{marx:19:sci,valiant:11:sub-linear,gao:17:mixture,kozachenko:87:firstnn,kraskov:04:ksg,paninski:08:kernel,mandros:20:fdmixed,marx:21:myl}. Here, we focus on continuous data, for which a broad spectrum of MI estimators exists. Most common are estimators based on discretization~\cite{darbellay:99:adaptive-partitioning,marx:21:myl,koeman:14:cmi-forests}, kernel density estimation~\cite{paninski:08:kernel,gao:17:lnn}, and $k$-nearest neighbor estimation~\cite{frenzel:07:fp,kozachenko:87:firstnn,kraskov:04:ksg}.
Recently, $k$NN-based estimators have been established as state-of-the-art and can be computed efficiently, e.g.~via the $k$-D trie method~\cite{vejmelka:07:mi-fast-computation}.

Simply put, $k$NN-based methods estimate the local density around each point $i$ via its $k$-nearest neighbors~\cite{kraskov:04:ksg,lord:18:gknn}. Critical for the performance of these estimators are assumptions about the shape of the local volumes used to calculate the densities. One group of estimators measure the local distances via $L_2$ or $L_{\infty}$-norm~\cite{frenzel:07:fp,kozachenko:87:firstnn}. Other approaches try to estimate the volumes via locally computing an SVD~\cite{lord:18:gknn} or PCA~\cite{lu:20:gknn-boosting} transformation, or use a local Gaussian kernel~\cite{gao:17:lnn}. Alternatively, the \ksg~\cite{kraskov:04:ksg} estimator avoids estimating the volumes all along by computing the distance to the $k$th neighbor on the joint space, while simply counting the neighbors falling within this region in $\Xg$ and $\Yg$. Gao et al.~\cite{gao:18:demystifying} proved that the \ksg estimator is consistent and proposed a bias-corrected alternative, which focuses on low-dimensional data. Closest to our approach is $k$NN-based estimation using random forests to estimate the nearest neighbors, which requires either $\Xg$ or $\Yg$ to be discrete~\cite{mehta:19:cmi-forests-vogelstein}. None of these estimators has been evaluated on more than $20$ dimensions.

To efficiently estimate MI in a high-dimensional setting, we build upon the manifold assumption~\cite{cayton:05:manifold-learning-algorithms}, based on which embedding techniques were developed that successfully capture the most relevant information in few dimensions
by focussing on preserving local Euclidean distances and estimating geodesics that resemble the data location on the manifold~\cite{mcinnes:20:umap,vandermaaten:08:tsne,tenenbaum:00:isomap}.

In particular, we suggest a novel approach that estimates mutual information considering geodesic distances, combining ideas from manifold learning and MI estimation. We leverage recent advances of Madhyasta et al.~\cite{madhyastha:20:gf} in approximating geodesic distances based on tree estimates on sparse linear projections of the original space~\cite{breiman:01:rf,dasgupta:08:random-proj-trees}, which is suitable for $k$NN estimation on high-dimensional data.

\section{Preliminaries}
\label{sec:prelim}

Next, we briefly introduce the line of $k$NN based MI estimators more formally, and then shortly discuss the \ksg~\cite{kraskov:04:ksg} estimator and its limitations.

To estimate the differential entropy $\hat{h}(\Xg)$ of a random variable $\Xg$, $k$NN based estimators~\cite{frenzel:07:fp,kozachenko:87:firstnn} estimate the log-density locally around each point $\xg_i$, e.g.~by computing the volume $V_{\xg_i}$ of a ball enclosing all its $k$NNs~\cite{grassberger:85:pre-ksg,kozachenko:87:firstnn}. That is,
\begin{equation}
\label{eq:loghatf}
\log \hat{f}_{\Xg}(\xg_i) = \psi(k) - \psi(n) - \log V_{\xg} \, ,
\end{equation}
where $\psi(k)$ and $\psi(n)$ are the correction terms, with $\psi$ being the digamma function.\!\footnote{The digamma function is defined as $\psi(x) = \Gamma(x)^{-1}d\Gamma(x)/dx$. It satisfies the recursion $\psi(x+1)=\psi(x) + \frac{1}{x}$ with $\psi(1) = - C$, where $C=0.577215\dots$ is the Euler-Mascheroni constant.} A straight-forward approach to estimate $I(\Xg;\Yg)$ would be to estimate each of the involved entropy terms individually using Eq.~\eqref{eq:loghatf} based on the same $k$. 
Kraskov et al.~\cite{kraskov:04:ksg} showed that this will, however, induce a bias. Instead, they compute the distance to the $k$th neighbor only on the joint space and retrospectively count the data points falling within this region in $\Xg$ and $\Yg$.

\paragraph*{KSG Estimator}
Let $\Zg = (\Xg,\Yg)$ be the joint space spanned by $\Xg$ and $\Yg$. For any two data points $\zg_i$ and $\zg_j$, we define the distance between them as the maximum distance from their projections in $\Xg$ resp. $\Yg$,
\begin{equation}
\label{eq:max-norm}
d(\zg_i,\zg_j)_{\max} = \max \{ d(\xg_i, \xg_j), d(\yg_i, \yg_j) \} \, ,
\end{equation}
where the distances measured on the subspaces, i.e. $d(\xg_i, \xg_j)$ and $d(\yg_i, \yg_j)$, can be instantiated with any norm.
Furthermore, it is possible to use different norms for the subspaces $\Xg$ and $\Yg$~\cite{kraskov:04:ksg}.
In practice, both $d(\xg_i, \xg_j)$ and $d(\yg_i, \yg_j)$ are instantiated with the $L_{\infty}$-norm, and hence $d(\zg_i,\zg_j)_{\max}$ reduces to the $L_{\infty}$-norm over the joint space $(\Xg, \Yg)$.

Next, let $\frac{1}{2} \rho_{i,k}$ be the distance to the $k$-th neighbor on the $\Zg$ space using the maximum norm as defined above. We define the number of data points $\xg_j$ with a distance smaller than $\frac{1}{2} \rho_{i,k}$ to point the $\xg_i$ in the $\Xg$ subspace as $n_{\xg,i}$, i.e.
\begin{equation}
\label{eq:nx}
n_{\xg,i} = \left| \left\{ \xg_j \, : \, d(\xg_i, \xg_j) <  \frac{1}{2} \rho_{i,k}, \, i \neq j \right\} \right| \, ,
\end{equation}
and similarly, we define $n_{\yg,i}$ as the number of data points with a smaller distance to point $\yg_i$ than $\frac{1}{2} \rho_{i,k}$ on the $\Yg$ subspace. As an example consider the two-dimensional plot in Figure~\ref{fig:ksg-nx-ny}. In this example $n_{\xg,i} = 1$, that is, except $\xg_i$ itself, there exists only one further point with a distance $<  \frac{1}{2} \rho_{i,k}$ to $\xg_i$ for $k=1$. On the other hand, there exist $6$ data points, which fulfill this criterium for the $\Yg$ subspace. In general, it holds that $n_{\xg,i}+1 \ge k$, as well as $n_{\yg,i}+1 \ge k$.

\begin{figure}[t]
	\center
	\includegraphics{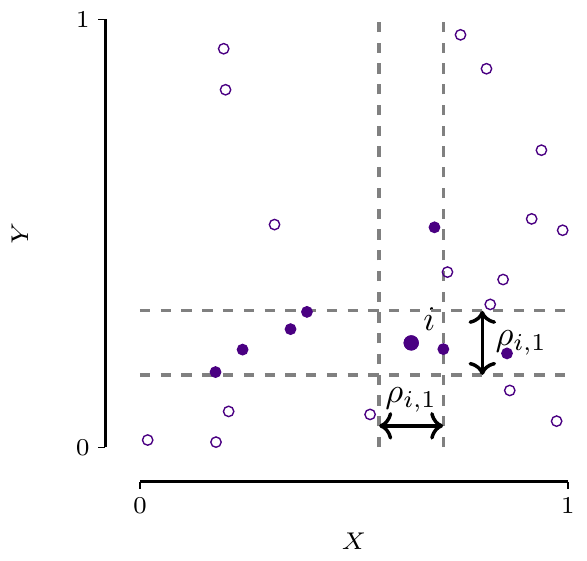}
	\caption{Example calculation for $\frac{1}{2}\rho_{i,k}$, where $k=1$, and we use the $L_{\infty}$-norm as distance measure for $\Xg$ and $\Yg$. In this case, $n_{\xg,i} = 1$ and $n_{\yg,i}=6$.}
	\label{fig:ksg-nx-ny}
\end{figure}

To compute the entropies for the subspaces $\Xg$ and $\Yg$, we consider the volumes related to $\rho_{i,k}$ computed on $\Zg$ and determine the corresponding $k$ retrospectively, i.e. we define $h^{\ksg}(\Xg)$ as
\begin{equation}
h^{\ksg}(\Xg) = - \frac{1}{n} \sum_{i=1}^n \psi(n_{\xg,i}+1) - \psi(n) - \log V_{\xg_i} \, ,
\end{equation}
where $\log V_{\xg_i}$ is computed as $\log c_{d_{\Xg}} - d_{\Xg} \rho_{i,k}$, with $c_{d_{\Xg}}$ being the volume of a $d_{\Xg}$-dimensional unit ball. By adding up the individual entropy terms, we arrive at the \ksg estimator, defined as
\begin{align}
\label{eq:ksg-estimator}
I^{\ksg}(\Xg;\Yg) &= \psi(k) {+} \psi(n) \\
&- \frac{1}{n} \sum_{i=1}^n \psi(n_{\xg,i}{+}1) {+} \psi(n_{\yg,i}{+}1) \, ,
\end{align}
where we can see that all volume terms cancel.

Although, it was shown that the \ksg estimator is consistent~\cite{gao:18:demystifying}, its bias increases with the number of dimensions $d$ as $\mathcal{O}(n^{-1/d})$~\cite{gao:18:demystifying}, which is problematic for the high-dimensional setting where $d > n$. In practice, we observe that the more dimensions we consider, the larger $n_{\xg,i}$ and $n_{\yg,i}$ become on average. This phenomenon occurs naturally, since we consider the maximum norm in the joint space. In extreme cases, $\psi(n_{\xg,i}{+}1) {+} \psi(n_{\yg,i}{+}1)$ is on average larger than $\psi(k) {+} \psi(n)$ and hence the estimate can be negative.

To address the limitations of \ksg in high-dimensional data, we leverage insights from manifold learning and mutual information estimation. In the next section, we will formally define the assumed data generative model and provide identifiability results.

\section{Mutual Information \& Manifold Learning}
\label{sec:problem}

The classical objective of MI estimation is to estimate $I(\Xg;\Yg)$ given an iid sample of the joint distribution $P_{\Xg,\Yg}$. Especially for high-dimensional and possibly noisy $\Xg$ and $\Yg$, estimating mutual information is challenging~\cite{lu:20:gknn-boosting,lord:18:gknn,gao:17:lnn}. Here, we view this problem from a manifold learning perspective~\cite{cayton:05:manifold-learning-algorithms},
where we assume that shared information between $\Xg$ and $\Yg$ is encoded in an intrinsic low-dimensional space $(\Xt,\Yt)$, whereas the majority of the dimensions of $\Xg$ and $\Yg$ are independent of each other, or correspond to noise dimensions, i.e. $I(\Xg;\Yg)$ is upper-bounded by $I(\Xt;\Yt)$.

To rigorously define the problem setting, we write down our assumptions about the data generative process as a structural causal model~\cite{pearl:09:causalitybook}.
Simply put, 
we assume that our observed variables $\Xg$ and $\Yg$ are both generated from a shared variable $\Zt$ and individual variables $\Xn$ and $\Yn$ that are independent of $\Zt$ and independent of each other (see Figure~\ref{fig:data-generation}). 
Further, we assume that the information that $\Xg$ contains about $\Zt$ is first passed through $\Xt$. Accordingly, the information that $\Yg$ has about $\Zt$ is processed through the path $\Zt \to \Yt \to \Yg$. We chose this model to reason about the shared information of $\Xg$ and $\Yg$ in terms of $I(\Xt;\Yt)$, which is assumed to be low-dimensional.

We formally define the generative model below.

\begin{model}
\label{model:1}
Given multi-dimensional random vectors $\Zt, \Xn, \Yn$, which are are pairwise independent. We generate $\Xg$ and $\Yg$ according to
\begin{align}
\Xt &= \fung_{\Xt}\left(\Zt \right)\qquad\quad\Yt = \fung_{\Yt}\left(\Zt \right) \\
\Xg &= \fung_{\Xg} \left(\Xt, \Xn \right)\quad\Yg = \fung_{\Yg} \left(\Yt, \Yn \right) \, ,
\end{align}
where we require that
\begin{enumerate}
	\item $\fung_{\Xt}$ and $\fung_{\Yt}$ preserve some information about $\Zt$, s.t.~$I(\Xt,\Yt) > 0$,
	\item $\fung_{\Xg}$, as well as $\fung_{\Yg}$ are required to be homeomorphisms (smooth and uniquely invertible maps), and
	\item for all sub-spaces $\Xt'$ or $\Yt'$ containing only a proper subset of rows of $\Xt$ resp. $\Yt$, it holds that $I(\Xt';\Yt) < I(\Xt; \Yt)$, resp. $I(\Xt; \Yt') < I(\Xt;\Yt)$.
\end{enumerate}
\end{model}

\begin{figure}[t]%
	\centering
	\includegraphics[]{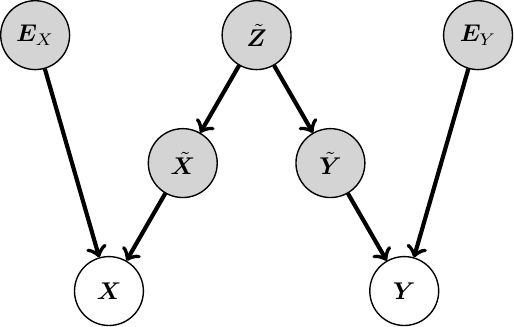}
	\caption{[Generative Model] We are interested in estimating $I(\Xt;\Yt)$, from $\Xg$ and $\Yg$. The random vector $\Xg$ is generated as a function of two unobserved  variables (denoted as shaded nodes) $\Xt$ and $\Xn$, where $\Xt \Indep \Xn$; $\Yg$ is generated accordingly. Further, $\Xt$ and $\Yt$ are generated from a shared latent factor $\Zt$.}
	\label{fig:data-generation}
\end{figure}

Conditions $1$-$3$ in Model~\ref{model:1} are very light requirements. If Cond.~1 would be violated, our estimator could still detect that there is no shared information between $\Xg$ and $\Yg$, but there would be little point in modeling the generative process from a manifold learning perspective. Similarly, Cond.~3 requires that all variables in the low-dimensional representation are contributing to the shared information about $\Xg$ and $\Yg$. Any subsets of variables not fulfilling this requirement would be modeled within the individual factor. Last, Cond.~2 assures that $I(\Xt;\Yt)$ can be recovered from $I(\Xg;\Yg)$, as shown in Proposition~\ref{prop:irrelevant-information}. This is a standard assumption in manifold learning: the low-dimensional manifold can be modeled as a homeomorphism of the ambient space or a subspace of it~\cite{cayton:05:manifold-learning-algorithms}.

In the following we will show that for a data generative process as defined in Model~\ref{model:1}, we can compute $I(\Xt;\Yt)$ given only $I(\Xg;\Yg)$. For arbitrary functions $\fung$ and $\fung'$, where the transformed variables $\fung(\Xg)$ and $\fung'(\Yg)$ are differentiable almost everywhere, the data processing inequality states that $I(\fung(\Xg);\fung'(\Yg)) \le I(\Xg;\Yg)$~\cite{cover:06:elements}. Applied to our scenario, $I(\Xg;\Yg)$ would at most provide a lower bound for $I(\Xt;\Yt)$. When requiring that $\fung$ and $\fung'$ are homeomorphisms, as stated in Condition~2 in Model~\ref{model:1}, we can make a stronger statement.

\begin{proposition}
\label{prop:irrelevant-information}
Given a data generative process as defined in Model~\ref{model:1}, then $I(\Xt;\Yt) = I(\Xg;\Yg)$.
\end{proposition}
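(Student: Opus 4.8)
The plan is to establish the chain of equalities $I(\Xg;\Yg) = I((\Xt,\Xn);(\Yt,\Yn)) = I(\Xt;\Yt)$, which splits the argument into two conceptually separate moves: first strip away the ambient homeomorphisms $\fung_{\Xg},\fung_{\Yg}$ using the invertibility granted by Condition~2, and then discard the exogenous noise $\Xn,\Yn$ using the independence structure of Model~\ref{model:1}. As a sanity check, note that the easy direction $I(\Xt;\Yt) \le I(\Xg;\Yg)$ is immediate from the data processing inequality (DPI), since $\Xt$ is recoverable from $\Xg$ as the $\Xt$-projection of $\fung_{\Xg}^{-1}(\Xg)$ and likewise for $\Yt$; the work lies in the reverse direction.

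For the first move I would apply the DPI in \emph{both} directions. Applied to $\Xg = \fung_{\Xg}(\Xt,\Xn)$ and $\Yg = \fung_{\Yg}(\Yt,\Yn)$ it gives $I(\Xg;\Yg) \le I((\Xt,\Xn);(\Yt,\Yn))$. Because $\fung_{\Xg}$ and $\fung_{\Yg}$ are homeomorphisms, their inverses are well defined and differentiable almost everywhere, so they are themselves admissible maps for the DPI; applying it to $\fung_{\Xg}^{-1}$ and $\fung_{\Yg}^{-1}$ yields the opposite inequality $I((\Xt,\Xn);(\Yt,\Yn)) \le I(\Xg;\Yg)$. Together these give the exact invariance $I(\Xg;\Yg) = I((\Xt,\Xn);(\Yt,\Yn))$.

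For the second move I would peel off $\Xn$ and $\Yn$ with the chain rule for mutual information and the conditional independences read off from the generative graph. Since $\Xt,\Yt$ are deterministic functions of $\Zt$ and the exogenous variables $\Zt,\Xn,\Yn$ are (jointly) independent, both $\Xn$ and $\Yn$ are independent of all remaining variables. Writing $I((\Xt,\Xn);(\Yt,\Yn)) = I(\Xn;\Yt,\Yn) + I(\Xt;\Yt,\Yn \mid \Xn)$, the first term vanishes because $\Xn \Indep (\Yt,\Yn)$; conditioning on the independent $\Xn$ leaves the second term equal to $I(\Xt;\Yt,\Yn)$, which I then decompose as $I(\Xt;\Yt) + I(\Xt;\Yn \mid \Yt)$, the last term vanishing because $\Yn \Indep (\Xt,\Yt)$. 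This collapses the expression to $I(\Xt;\Yt)$ and closes the chain.

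The main obstacle is the first move: arguing that mutual information is genuinely \emph{invariant}, not merely non-increasing, under the ambient maps. The one-sided DPI is routine, but the reverse inequality rests entirely on $\fung_{\Xg},\fung_{\Yg}$ being invertible with inverses admissible for the DPI, which is precisely what the homeomorphism requirement in Condition~2 secures. A secondary subtlety is the independence bookkeeping in the second move: the vanishing terms require the exogenous variables to be \emph{mutually} independent, which is the standard structural-causal-model reading of Model~\ref{model:1}; pairwise independence alone would not license the joint statements $\Xn \Indep (\Yt,\Yn)$ and $\Yn \Indep (\Xt,\Yt)$ invoked above, so I would make this reading explicit.
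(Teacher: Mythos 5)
Your proposal is correct and follows essentially the same route as the paper: invariance of mutual information under the ambient homeomorphisms (which the paper cites from Kraskov et al.\ and you re-derive via a two-sided data processing inequality), followed by discarding \Xn and \Yn using the independence structure (the paper via an entropy decomposition, you via the chain rule for mutual information --- an equivalent bookkeeping). Your observation that the joint statements $\Xn \Indep (\Yt,\Yn)$ and $\Yn \Indep (\Xt,\Yt)$ require mutual rather than merely pairwise independence of $\Zt,\Xn,\Yn$ is a valid catch; the paper's own proof silently relies on the same strengthening when it drops \Yn from the conditioning set.
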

\begin{proof}
Due to Condition~2 in Model~\ref{model:1}, we know that $\fung_{\Xg}$ and $\fung_{\Yg}$ are homeomorphisms. Kraskov et al.~\cite{kraskov:04:ksg} showed for any transformations $\Xg^t = \fung(\Xg)$, $\Yg^t = \fung'(\Yg)$, where $\fung$ and $\fung'$ are homeomorphisms, it holds that $I(\Xg;\Yg) = I(\Xg^t;\Yg^t)$. As an immediate consequence, we observe that in our case $I(\Xg;\Yg) = I(\Xt,\Xn;\Yt,\Yn)$. Hence, we can derive that
\begin{align}
I(\Xg;\Yg) &= h(\Xt,\Xn) - h \left( \Xt, \Xn \mid \Yt, \Yn \right) \\
&= h(\Xt, \Xn) - h\left(\Xt, \Xn \mid \Yt\right) \\
&= h(\Xt) + h(\Xn) - \left( h(\Xt \mid \Yt) + h(\Xn) \right) \\
&= h(\Xt) - h(\Xt \mid \Yt) = I(\Xt;\Yt) \, .
\end{align} 
In Line~2, we can omit $\Yn$ since $\Yn \Indep \Xn$, $\Yn \Indep \Zt$ by definition. Due to the Markov chain structure, $\Yn \Indep \Xt$, as well. Similarly, in Line~3 we can exploit that $\Xt \Indep \Xn$ and $\Yt \Indep \Xn$.
\end{proof}

Simply put, Proposition~\ref{prop:irrelevant-information} shows that, for a data generative process as defined in Model~\ref{model:1}, estimating $I(\Xg;\Yg)$ is equivalent to estimating $I(\Xt;\Yt)$. 
If we are provided only a finite number of iid samples, which is the case in practice, it allows us to significantly improve the sample efficiency for estimating $I(\Xg;\Yg)$. In particular, assume that we could recover the low-dimensional manifold $(\Xt, \Yt)$ with $\tilde{d}$ dimensions, which preserves the shared information on the high-dimensional space $(\Xg, \Yg)$ with $d \gg \tilde{d}$ dimensions. This could generally lead to a significant improvement of sample complexity for MI estimators. For the \ksg estimator, for example, the bias would decrease from
$\mathcal{O}\left(n^{-1/d} \right)$ to $\mathcal{O}\left(n^{-1 / \tilde{d}} \right)$, and hence only depend on the relevant dimensions.

In practice, this insight suggests a two-step procedure. In the first step, we aim to learn the low-dimensional manifold $(\Xt, \Yt)$ from $(\Xg, \Yg)$, and in the second step, we estimate the mutual information between $\Xg$ and $\Yg$ from the distances on the learned manifold $(\Xt, \Yt)$. In the next section, we propose such an approach based on Geodesic Forests.

\section{Geodesic Mutual Information Estimation}
\label{sec:geodesic}

In this section, we explain how to estimate mutual information via geodesic distances, where we first introduce our new estimator and then explain how we compute the corresponding quantities via Geodesic Forests.

\subsection{Geodesic \ksg}
\label{sec:gksg}

To efficiently estimate mutual information from data generated according to Model~\ref{model:1}, we propose to first learn the embedded low-dimensional manifold of the joint space $(\Xg,\Yg)$ via Geodesic Forests~\cite{madhyastha:20:gf}, as described subsequently in Sec~\ref{sec:gf}, and then compute the local $k$NN distances from this representation. In other words, we aim to approximate the distance between two points via the length of its shortest path on the manifold, i.e. its geodesic distance. 

More specifically, we approximate $d(\xg_i,\xg_j)$ and $d(\yg_i,\yg_j)$ in Eq.~\ref{eq:max-norm} with geodesic distances $\dgf(\xg_i,\xg_j)$ and $\dgf(\yg_i,\yg_j)$ obtained via marginalization (see Sec.~\ref{sec:geodesic-dist}) from the Geodesic Forest trained on $(\Xg,\Yg)$. To compute the distances on the joint space $\Zg = (\Xg,\Yg)$, we follow the \ksg approach~\cite{kraskov:04:ksg} and stick to the maximum between the distances on $\Xg$ and $\Yg$. Thus, we define $\frac{1}{2} \rho_{i,k}^G$ as the distance to the $k$th neighbor on the joint space $\Zg$ and obtain $n_{\xg,i}^G$ as
\begin{equation}
\label{eq:nxg}
n_{\xg,i}^G = \left| \left\{ \xg_j \, : \, \dgf(\xg_i, \xg_j) <  \frac{1}{2} \rho_{i,k}^G, \, i \neq j \right\} \right| \, ,
\end{equation}
and compute $n_{\yg,i}^G$ accordingly. Finally, we derive our proposed \ourmethod estimator as
\begin{align}
\label{eq:gksg-estimator}
I^{\ourmethod}(\Xg;\Yg) &= \psi(k) {+} \psi(n) \\
&- \frac{1}{n} \sum_{i=1}^n \psi(n_{\xg,i}^G{+}1) {+} \psi(n_{\yg,i}^G{+}1) \, .
\end{align}
Next, we explain how to estimate geodesic distances $\dgf$ from Geodesic Forests. 

\subsection{Geodesic Forests}
\label{sec:gf}

The term \emph{Geodesic Forest} (GF) has been introduced by Madhyastha et al.~\cite{madhyastha:20:gf} and describes an unsupervised version of sparse projection oblique randomer forests~\cite{tomita:20:sporf}. In a nutshell, each node of a tree in a GF is split based on a sparse linear projection of each data point onto a one-dimensional feature.
Classical splitting criteria allow to compute binary splits of the projected samples in this 1D space efficiently.
For a collection of trees, the relative geodesic similarity of two data points $\xg_i,\xg_j$ is estimated by the fraction of leaves they occur in \emph{together},
which has been proven successful to estimate geodesic distances even in the presence of many noise dimensions~\cite{madhyastha:20:gf}.

More formally, given a sample $\xg^n = \{ \xg_i, \dots, \xg_n \}$ of a $d_{\Xg}$-dimensional random vector $\Xg$, GF builds a set of $T$ trees. Each tree $T_i$ is trained on a bootstrapped sub-sample of size $m < n$, as typical for learning random forests~\cite{breiman:01:rf}. To grow a tree, we recursively split each parent node into its two child nodes until a certain stopping criterium is met. The two critical features, in which Geodesic Forests are different from classical random forests are the node splitting and the stopping criterium, which we describe in more detail below.

\paragraph*{Node Splitting}

Instead of splitting on a random feature, GF computes $p$ sparse random projections of the feature space and splits on that 1D projection, which minimizes the fast-BIC criterium (see below). To generate sparse projections, GF samples a random projection matrix $\bm{A} \in \{ -1,0,1 \}^{d_{\Xg} \times p}$, where an entry $a_{ij}$ is non-zero with probability $\lambda$, i.e.~$P(a_{ij} = 1) = P(a_{ij} = -1) = \frac{\lambda}{2}$, and zero otherwise. The sparsity parameter $\lambda$ is typically set to $\lambda = \frac{1}{d_{\Xg}}$. Given projection matrix $\bm{A}$, the projected feature matrix is $\Xg' = \bm{A}^T\Xg$, from which we can extract $p$ one-dimensional features, which are each evaluated by the splitting criterion.

\paragraph*{Fast-Bic}

To find a cut-point in a one-dimensional vector, the authors of GF~\cite{madhyastha:20:gf} introduce fast-BIC,
which is a regularized version of the classical two-means criterium~\cite{dasgupta:08:random-proj-trees}. Both criteria induce a hard cluster assignment to either the left or right cluster.

The general Bayesian Information Coefficient (BIC) for a model $M$ with parameter vector $\theta_M$ of length $| \theta_M |$ can be written as $\mathit{BIC}(M) = - 2 \log \hat{L} + \log (n) | \theta_M |,$
where $\log \hat{L}$ is the empirical log-likelihood of the data given model $M$. In our case, the model consists of five parameters, the cluster assignment and the parameters $\hat{\mu}_i$ and $\hat{\sigma}_i^2$, for $i \in \{ 1,2 \}$, which parameterize the assumed Gaussian distribution for cluster $i$.
Accordingly, the empirical negative log likelihood is defined as
\begin{equation}
- \log \hat{L} = \sum_{i=1}^2 \frac{n_i}{2} \left( 2 \log w_i - \log 2 \pi \hat{\sigma}_i^2 \right) \, ,
\end{equation}
where $w_i = n_i / n$ is the probability of a data point being assigned to cluster $i$.
Given an ordered one-dimensional sequence $x_1, \dots, x_n$, such that $x_i \le x_{i+1}$, we can compute the optimal split point in $\mathcal{O}(n^2)$ time, since we need to obtain for each of the potential $n-1$ cut-points the variances for both clusters.
We can, however, compute it even faster.

\paragraph*{Faster Fast-Bic} For an ordered sequence, we can reduce the runtime complexity, to determine the best split point, to $\mathcal{O}(n)$. Hence, even for an unordered sequence, we obtain a runtime in $\mathcal{O}(n \log n)$ by sorting, which is still faster than the original fast-BIC computation.
To achieve this speed-up, we use an elegant trick developed for segmentation. As derived by Terzi~\cite[Ch.~2]{terzi:06:segmentation}, we can compute the empirical variance $\hat{\sigma}^2_{i,j}$ for an arbitrary segment $x_i, \dots, x_j$, with $1 \le i \le j \le n$ as
\begin{equation}
\hat{\sigma}^2_{i,j} = \frac{1}{j{-}i{+}1} \left( \left( \CSS_j {-} \CSS_{i-1} \right) {-} \frac{1}{j{-}i{+}1} \left( \CS_j {-} \CS_{i-1} \right)^2 \right) \, ,
\end{equation}
where $\CS_i = \sum_1^i x_i$ is the cumulative sum of the first $i$ entries of the ordered sequence $x_1, \dots, x_n$ and $\CSS_i = \sum_1^i x_i^2$ the corresponding sum of squares, with $\CSS_0 =\CS_0 = 0$. In other words, after precomputing $\CS$ and $\CSS$ in linear time, we can compute the variance for an arbitrary segment in constant time.

\subsection{Approximate Geodesic Distances}
\label{sec:geodesic-dist}

To obtain a dissimilarity measure from a random forest, we can utilize the proximity score for random forests proposed by Breiman~\cite{breiman:01:rf}. That is, let $L_{ij}$ denote the number of trees for which data points $i$ and $j$ end up in the same leaf and let $T$ be the number of trees, the proximity score between two data points $i$ and $j$ is defined as
$p^F(\xg_i,\xg_j) = L_{ij}/T$.
Since $p^F(\xg_i,\xg_j) \in [0,1]$, we can compute a dissimilarity between two points as $d_F(\xg_i,\xg_j) = 1- p^F(\xg_i,\xg_j)$. In their empirical evaluation, Madhyastha et al.~\cite{madhyastha:20:gf} demonstrate that $d_F$ robustly recovers geodesic neighborhoods from high-dimensional data with many independent or noise dimensions.
As geodesic nearest neighbors, they define points that lie close on the low-dimensional manifold.

In theory, estimating the geodesic $k$-nearest neighbors is exactly what we are after, however, $d_F$ is not a proper distance metric. In particular, $d_F$ satisfies the reflexivity property ($d_F(\xg_i,\xg_i) = 0$), the non-negativity property, and the symmetry property ($d_F(\xg_i,\xg_j) = d_F(\xg_j,\xg_i)$). Despite those, $d_F(\xg_i,\xg_j) = 0$ does not imply that $\xg_i = \xg_j$, since two points could always end up in the same leaf, especially for small forests. Thus, the definiteness property is violated. In addition, $d_F$ does not satisfy the triangle inequality, i.e.~$d_F(\xg_i,\xg_k) \le d_F(\xg_i,\xg_j) + d_F(\xg_j,\xg_k)$ cannot be guaranteed. 

Hence, we propose an adjusted dissimilarity measure. The key idea of this adjusted measure builds upon the fact that locally a manifold resembles a Euclidean space, thus the geodesic distances locally become the $L_2$-norm. In our context, we assume that two points are close on the manifold, if $d_F(\xg_i,\xg_i)=0$. Under this premise, we propose the distance measure $\dgf$, i.e.
\begin{equation}
\label{eq:dgf}
\dgf(\xg_i,\xg_j) = 
\begin{cases}
d_F(\xg_i, \xg_j) & \text{if $d_F(\xg_i, \xg_j) > 0$,} \\
\frac{d_2(\xg_i, \xg_j)}{c  (T + \epsilon)} & \text{otherwise.}
\end{cases}
\end{equation}
In short, for all pairs $i, j$ for which $d_F(\xg_i, \xg_j) = 0$, we approximate their local geodesic distance via their Euclidian distance normalized by a constant factor. The normalization factor ensures that the normalized distances are always smaller than $\frac{1}{T}$, i.e.~the smallest non-zero value that $d_F$ can attain. It consists of $c$, the maximum $L_2$-norm between any two pairs $i,j$, the number of trees $T$ and a small constant $\epsilon > 0$.

As a result, $\dgf$ satisfies reflexivity, non-negativity and symmetry, and in addition, satisfies definiteness and locally (for those points, for which $d_F(\xg_i, \xg_j) = 0$) also satisfies the triangle inequality. We argue that possible violations of the triangle inequality for data points, for which $d_F(\xg_i, \xg_j) > 0$, are on average not relevant for $k$NN estimation with small $k \le 10$ in a high-dimensional setting, which is our main use-case.

\paragraph*{Marginal Distances}

Next, we briefly outline how we can compute the distance between two points $i,j$ on a subspace $\bm{S}$ of $\Xg$ given the forest learned on $\Xg$. 

In essence, we can compute $\dgf(\bm{s}_i,\bm{s}_j)$ in a straight forward manner. To compute the local distances, i.e. $\frac{d_2(\bm{s}_i,\bm{s}_j)}{c  (T + \epsilon)}$, we need to set $c$ to refer to the maximum distance between two points in the subspace $\bm{S}$, and recompute $d_F(\bm{s}_i,\bm{s}_j)$ for each pair $i,j$. To compute $d_F$ for a subspace $\bm{S}$, we first need to recompute all leave assignments. That is, given a tree $T_j$, we assign each point $i$ to that leaf in $T_j$, to which it would be assigned, 
if projected onto $\bm{S}$.
After reassigning the leaves, we can compute $d_F(\bm{s}_i,\bm{s}_j)$ as above.
Based on the above procedure, we can compute $\dgf(\xg_i,\xg_j)$ and $\dgf(\yg_i,\yg_j)$ for each pair $i,j$ and use these distances to compute our \ourmethod estimator as described in Sec.~\ref{sec:gksg}.

Next, we will empirically evaluate \ourmethod.

\section{Experiments}
\label{sec:exps}

We extensively evaluate \ourmethod against state-of-the-art MI estimators \ksg, \gknn, and \lnn~\cite{kraskov:04:ksg,lord:18:gknn,gao:17:lnn}.
\gknn uses ellipsoids computed from principal components to better fit the local data distribution, \lnn uses KDE with bandwidths automatically determined from the nearest neighbors. 
In particular, we compare on standard synthetic benchmark data as well as two simulated manifolds with known baseline MI for varying sample sizes, dimensionality of the data, and neighborhood size for each estimator.
For all experiments, we train geodesic forests with original parameters \cite{madhyastha:20:gf}, i.e.~$\lambda = 1/d$ number of dimensions, $T=300$ trees, and $\sqrt{2n}$ minimum number of points to split a node.
We use hyperparameters as suggested by the respective methods, details on which can be found in Supplementary Material~\ref{apx:params}, and
report the average across MI estimates of 20 repetitions for all experiments.
For reproducibility, we make code and data publicly available.\!\footnote{\oururl}

\subsection{Synthetic Data}

\begin{figure}[t]
	\begin{minipage}[t]{0.5\linewidth}
	\centering
	\includegraphics[]{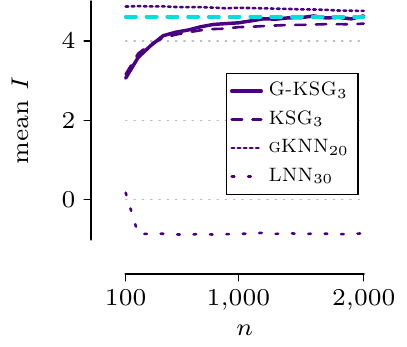}
	\end{minipage}%
	\begin{minipage}[t]{0.5\linewidth}
	\centering
	\includegraphics[]{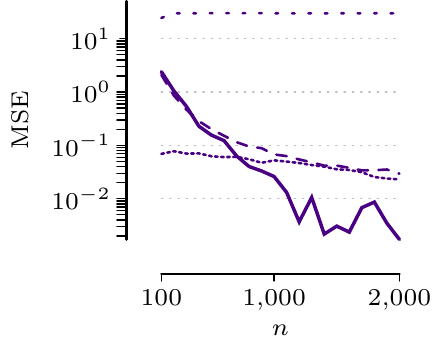}
	\end{minipage}%
	\caption{MI estimates with increasing sample size on uniform data (no noise), true MI is dashed cyan line.}
	\label{fig:accuracy-n100}
\end{figure}

We first evaluate the performance for estimating the MI between two variables generated from simple distributions, uniform and Gaussian.

\paragraph*{Sample Efficiency} To measure how well MI can be estimated with respect to the sample size, we draw datasets of size 100 to 2000 of $X$ and $Y$, where $X$ is uniformly distributed between $0$ and $1$ and $Y=X + N$ with $N \sim \mathit{Unif}(-\alpha/2,\alpha/2)$ and $\alpha = 0.01$.
The ground truth MI is given by $I(X;Y) = h(Y) - h(Z) = \frac{\alpha}{2} - \log \alpha$~\cite[Ex.~8.3]{cover:06:elements}.
Note that no independent variables $Z$ are added to the data, yet. 
We observe that even without independent variables in the data, \ourmethod is able to more efficiently estimate the MI, with an order of magnitude lower mean squared error (MSE) than classical \ksg (see Fig. \ref{fig:accuracy-n100}).
We further see that \lnn greatly underestimates the true mutual information on this uniform data.
While efficient even for as few as $100$ samples, \gknn constantly overestimates the true MI slightly, showing an order of magnitude larger mean squared error for $n\geq 1000$ samples compared to \ourmethod.
Note that, despite its complexity, \ourmethod is only a factor $10$ slower than classical \ksg, regardless of samples size (see Supplementary Material~\ref{apx:experiments}).
For the rest of the experiments, we will use $n=500$ samples, which is the largest sample size where the original KSG could keep up with \ourmethod.

\paragraph*{Uniform}
For the same data as above, we now add an increasing number of $\{0,2,...,20\}$ dimensions each sampled from a standard normal distribution to the original data.
We report the results in Fig.~\ref{fig:lin-unif} (top right), where we can observe that while \gknn, \ksg, and \ourmethod yield good estimates of the true MI without additional dimensions, \lnn drastically overestimates the true MI, exponentially increasing with the number of dimensions. The predicted MI of classical \ksg quickly falls to $0$ for as few as 4 dimensions and is hence useless for this data.
While also decreasing slightly, \ourmethod yields the most stable prediction of MI with respect to dimensionality, and with the lowest error on 15 or more dimensions.
\gknn underestimates even the original data without additional dimensions, and overestimates for higher-dimensional data, having numerical issues already for $10$ additional dimensions.

\paragraph*{High-Dimensional Data}
For the same uniform distribution, we generate data and add $\{50, 100, ..,600\}$ independent dimensions, adding half of the dimensions to $X$ and the other half to $Y$.
As expected, while \ksg immediately estimates a MI of $0$, and is therefore not useful at all, the MI estimate of \ourmethod decreases slightly but then remains stable with an increasing number of (independent) dimensions in the data.
Both \gknn as well as \lnn fail to yield any result even for $50$ dimensions due to numerical errors.
We provide results on this experiment in Supplementary Material~\ref{apx:experiments}.
Next, we investigate the behavior for variables from a different distribution.

\paragraph*{Gaussian}
We generate synthetic data of Gaussian distributed random variables $X$ and $Y$  with zero mean, unit variance and covariance of $0.9$.
Consequently, with correlation $\rho$ between $X$ and $Y$ being $0.9$, the true MI can be calculated as $I(X;Y) = - \frac{1}{2} \log(1-\rho^2)$.
For varying neighborhood sizes for density estimation, we report the results in Fig.~\ref{fig:lin-unif} (top left) for an increasing number of independent dimensions added to the original data. We observe that both \gknn as well as \lnn provide a decent estimate on the simple Gaussian data without additional dimensions, but then greatly overestimate the true MI for as few as $5$ independent or noise dimensions.
\gknn does not yield meaningful MI estimates beyond $10$ additional dimensions.
Classical \ksg provides a robust estimate on the simple Gaussian data, but quickly deteriorates with increasing dimensions, underestimating the true MI capturing almost zero mutual information.
While \ourmethod also experiences the same effect, it does so much more slowly and consistently yields the lowest error in terms of the true MI.

\subsection{Simulation Study}

\begin{figure}
	\begin{subfigure}[t]{0.49\linewidth}
		\includegraphics[]{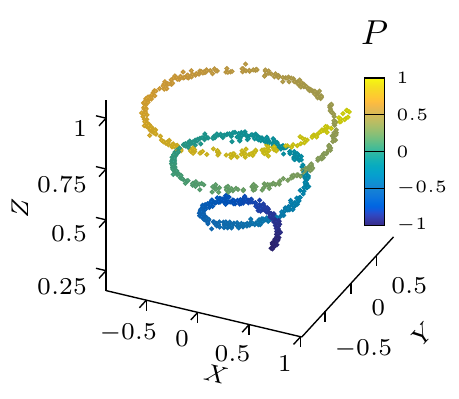}
	\end{subfigure}
	\hfill
	\begin{subfigure}[t]{0.49\linewidth}
		\includegraphics[]{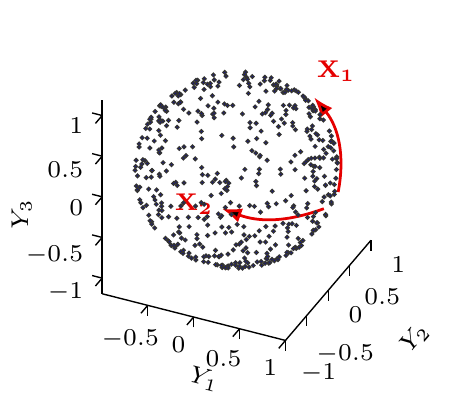}
	\end{subfigure}
	\caption{[Simulation Data] Shown are samples from the \texttt{Sphere} with input $(X_1,X_2)$ corresponding to (longitude, latitude) and output $(Y_1, Y_2, Y_3)$ cartesian coordinates (left), and \texttt{Helix} with input $P$ indicated by color, and cartesian output coordinates $(X,Y,Z)$ (right).}
	\label{fig:simdata}
\end{figure}

To test our approach for more complex data, and in particular data distributions resembling a manifold, we consider two simulated data sets.
The first dataset we study is a sphere, resembling e.g.~a planet (see Fig.~\ref{fig:simdata} left). Data is distributed on the sphere, for which $\Xg$ is given as longitude and latitude, and $\Yg$ is given as the 3D coordinates.
The second dataset is a helix (see Fig.~\ref{fig:simdata} right), for which $\Xg$ specifies the distance along the helix (1D), and $\Yg$ is given as the 3D coordinates, a typical problem from manifold learning.
We give details on how these datasets are sampled and how to compute a lower bound on the ground truth MI in Supplementary Material~\ref{apx:data-gen}.

\subsection{Helix}

First, we consider the \texttt{Helix} data.
Varying the number of independent dimensions for this data, we see that \ksg estimates that the data contains $0$ mutual information, as soon as any independent dimensions are added, whereas our method maintains a stable MI estimate even for many independent dimensions (see Fig. \ref{fig:lin-unif} bottom left). 
For \lnn we observe a sharp linear increase in predicted mutual information as a function of number of dimensions, first under and then over-predicting the true MI by a wide margin.
For \gknn, we observe first an overestimation of the MI which then comes close to the true MI for 2-6 additional dimensions. However, \gknn is not able to yield an estimate for more than $2$ additional dimensions when using $k=20$ respectively more than $6$ additional dimensions for $k=30$. Increasing the neighborhood size further does not make sense, as \gknn would then miss on the locality of the data when computing the volume, thus estimating global rather than local density.

\subsection{Sphere}

Next, we consider a simulation using the \texttt{Sphere} data, a simple dataset which, however, shows to be an astonishingly hard challenge for the state-of-the-art MI estimators.
We show the results in Fig.~\ref{fig:lin-unif} bottom right.
Overall, we observe similar trends as for \texttt{Helix}, giving further evidence that classical MI estimates are not able to capture manifolds within data with a large number of independent or noise dimensions.
As before, \ksg quickly deteriorates to estimate $0$ mutual information.
Similarly, \lnn shows a steep, near linear dependence between predicted MI and independent dimensions, which has little to do with the true MI.
Again, \gknn shows to perform poorly on the original task without added dimensions and fails to compute a result due to numerical issues on data of more than a handful of dimensions.
On this dataset, \ourmethod consistently predicts MI close to the ground truth, regardless dimensions.
Even for hundreds of added dimensions, \ourmethod remains stable estimates, whereas all other methods fail to do so (see Supplementary Material~\ref{apx:experiments}).

\begin{figure}[t]
	\begin{minipage}[t]{0.5\linewidth}
	\centering
	\includegraphics[]{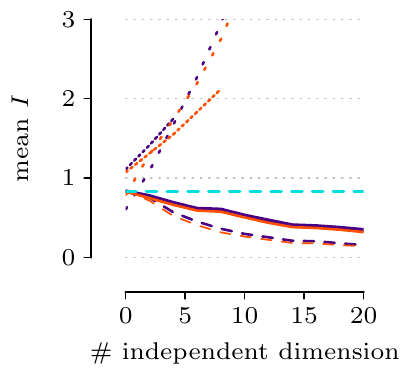}
	\end{minipage}%
	\begin{minipage}[t]{0.5\linewidth}
	\centering
	\includegraphics[]{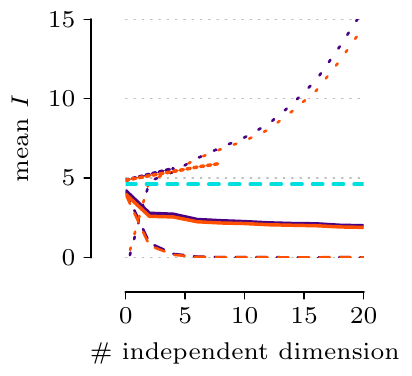}
	\end{minipage}%
	\linebreak
	\begin{minipage}[t]{0.5\linewidth}
	\centering
	\includegraphics[]{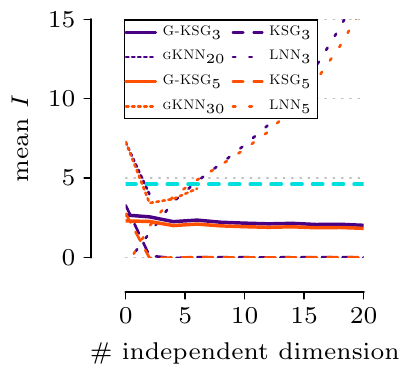}
	\end{minipage}%
	\begin{minipage}[t]{0.5\linewidth}
	\centering
	\includegraphics[]{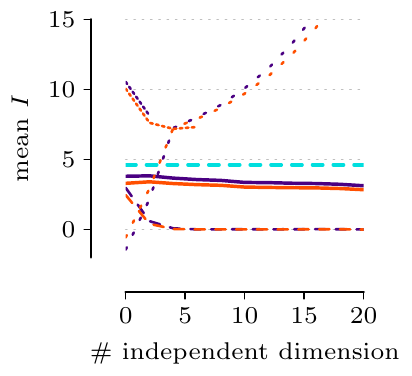}
	\end{minipage}%
	\caption{Mutual information estimates of \ourmethod, \ksg, \gknn, and \lnn for different number of $k$ indicated by method subscript for an increasing number of independent dimensions ($n{=}500$). Top-left Gaussian correlation, top-right uniform linear, bottom-left \texttt{Helix}, bottom-right \texttt{Sphere}, true MI is the dashed cyan line.}
	\label{fig:lin-unif}
\end{figure}

\section{Discussion \& Conclusion}
\label{sec:disc}

Mutual information, due to its non-parametric nature, captures non-linear dependencies and, hence, lends itself for measuring complex dependencies between random variables, and is routinely applied for challenging tasks such as gene network inference. 
Yet, estimation of MI on high-dimensional data remains an open problem.
In this work, we considered the problem of estimating MI for high-dimensional data under the manifold assumption, i.e. that the data has low intrinsic dimensionality.

To tackle this problem, we combined ideas from classical MI estimation and manifold learning.
In particular, we proposed to use geodesic distances to estimate local densities on the manifold rather than in the ambient space.
We leveraged the recently proposed Geodesic Forests, which can estimate relative geodesic distances even for high-dimensional data with many independent or noise dimensions.
To be able to scale to our setting, we further proposed a faster algorithm to compute the splitting criterion for individual nodes.
Based on computed geodesic distances, we then extended the state-of-the-art \ksg MI estimator to operate on these distances, leading to the \ourmethod estimator.

We evaluated \ourmethod against state-of-the-art MI estimators on standard benchmark data and two simulated manifold datasets.
The results show that \ourmethod outperforms its competitors, better estimating the true MI across tasks and settings.
Furthermore, it provides more stable results across data of different dimensionality, whereas the state-of-the-art fails to scale to higher dimensions, or greatly over- or underestimates the true MI. At the same time, \ourmethod is sample efficient, and is only slightly slower than classical \ksg, independent of sample size.

In summary, our approach allows us to efficiently and robustly estimate MI of high-dimensional data modeling the manifold assumption.
While thorough empirical evaluation showed that \ourmethod ably estimates the true MI across different datasets and dimensionalities,
it would make for engaging future work to study the theoretical guarantees of our estimator.
The estimation of geodesic distances renders this theoretical aspect extremely challenging.
Besides, we would be interested to study MI estimation for different settings, such as discrete-continuous mixtures or on time-series.

\section*{Acknowledgements}
AM is supported by the ETH AI Center. JF is supported by the International Max Planck Research School for Computer Science (IMPRS-CS).

\bibliographystyle{IEEEtranS}
\bibliography{bib/abbrev,bib/bib-jilles,bib/bib-paper}

\ifapx
\appendix
\section*{Supplementary Material}
\setcounter{section}{19}

The Supplementary Material, is split into three sections. In the first section, we provide the data generating mechanisms for the \texttt{Helix} and \texttt{Sphere} data, for which we derive lower bounds on the ground truth mutual information. Subsequently, we list the hyperparameter for \ourmethod and for the baselines in Section~\ref{apx:params}, and provide additional experiments in Section~\ref{apx:experiments}.

\subsection{Data Generation and MI Bounds}
\label{apx:data-gen}

In the following, we will first briefly recap the derivation of the ground truth of the mutual information for the linear uniform scenario and then explain the data generation, as well as, the derivation of the lower bounds for the ground truth for the \texttt{Helix} and \texttt{Sphere} data.

To derive the ground truth value for the linear uniform example, where $X {\sim} \mathit{Unif}(0,1)$, $Z {\sim}\mathit{Unif}(-\alpha/2,\alpha/2)$ and 
\[
Y = X + Z \, ,
\]
we follow Exercise~8.3 in~\cite{cover:06:elements}. First, note that
\begin{align}
I(X;Y) &= h(Y) - h(Y \mid X) \\
&= h(Y) - h(Z) = h(Y) - \log \alpha \, .
\end{align}
Thus, it remains to compute the entropy for $Y$, for which the density function can be computed as a convolution of $X$ and $Z$. Since $X$ and $Z$ follow a uniform distribution, $Y$ is a trapezoid and it can be derived that $h(Y) = \frac{\alpha}{2}$ for $\alpha \le 1$~\cite{cover:06:elements}.

\paragraph*{Helix}

For the \texttt{Helix} data, we generate a radius as $R\sim\mathit{Unif}(0,1)$, which we then embed as a spiral in a three-dimensional space. Accordingly, we generate the $X,Y,Z$ dimensions as
\begin{align*}
  P &= 5\pi + 3\pi R\,, \\
  X &= \frac{P \cos(P)}{8\pi} + N_1\,,\\
  Y &= \frac{P \sin(P)}{8\pi} + N_2\,,\\
  Z &= \frac{P}{8\pi} + N_3 \, ,
\end{align*}
where $N_1,N_2,N_3 \sim \mathit{Unif}(-\alpha/2,\alpha/2)$ are noise variables and $\alpha =0.01$.

To approximate the ground truth value of such a \texttt{Helix}, we can build upon the derivation for the uniform linear data to obtain a lower bound. A lower bound is sufficient in our case, since we do not measure how close we can estimate the true value, but how much information we can still recover in a noisy setting.

First, note that $I(X;X+Z) = I(X;X \sin (X) + Z)$ (see~\cite{gao:15:strongly-dependent}). More generally, $I(P;  \{ X,Y,Z \}) = I(P;  \{ X',Y',Z' \})$, where $X' = P + N_1$, $Y' = P + N_2$ and $Z' = P + N_3$. We get that
\begin{align}
I(P; \{ X',Y',Z' \}) &= h(X',Y',Z') - h(X',Y',Z' \mid P) \\
&= h(X',Y',Z') - h(N_1,N_2,N_3) \\
&= h(X') + h(Y' \mid X') + h(Z' \mid X',Y') \\
&- h(N_1,N_2,N3) \, ,
\end{align}
where again, we can rewrite $h(X',Y',Z' \mid P)$ as $h(N_1,N_2,N3)$ similar to the linear case. Additionally, all noise terms are independent of each other and thus $h(N_1,N_2,N3) = h(N_1) + h(N_2) + h(N_3) = 3 \log \alpha$. Now, due to the Markov chain structure, we can only approximate $h(Y' \mid X') \ge h(Y' \mid P)$. However, we conjecture that this approximation is quite close due to the low amount of noise added in the data generation. Similarly, $h(Z' \mid X',Y') \ge h(Z' \mid P)$. Thus
\begin{align}
 I(P; \{ X',Y',Z' \}) &\ge h(X') {+} h(Y' \mid P) {+} h(Z' \mid P) {-} 3 \log \alpha \\
&= h(X') - \log \alpha \\
&=  \frac{\alpha}{2} - \log \alpha \, .
\end{align}

\paragraph*{Sphere}

We generate samples from a \texttt{Sphere} by drawing latitude and longitude as $X_1,X_2\sim \mathit{Unif}(0,1)$ and compute the cartesian coordinates as
\begin{align*}
  Y_1 =& \cos(X_1) \cos(X_2) R + N_1 \,,\\
  Y_2 =& \cos(X_1) \sin(X_2) R + N_2 \,,\\
  Y_3 =& \sin(X_2) R + N_3 \,.
\end{align*}
where $N_1,N_2,N_3 \sim \mathit{Unif}(-\alpha/2,\alpha/2)$, and we use a radius of $R=1$ and $\alpha = 0.01$ for our experiments.

To derive a lower bound on the mutual information, we follow a similar procedure as for the \texttt{Helix} data. First, we can rewrite $Y_3' = X_2 + N_3$ and get that
\begin{align}
I( \bm{X}; \bm{Y}) &= h(Y_1,Y_2,Y_3') - h(Y_1,Y_2,Y_3' \mid X_1,X_2) \\
&= h(Y_3') + h(Y_2 \mid Y_3') + h(Y_1 \mid Y_2, Y_3') \\
&- h(Y_1,Y_2,Y_3' \mid X_1,X_2) \; ,
\end{align}
We can similarly decompose the conditional term to $h(Y_3' \mid X_1,X_2) + h(Y_2\mid X_1,X_2, Y_3' ) + h(Y_1\mid X_1,X_2, Y_3', Y_2)$. From the linear case, we know that $h(Y_3') - h(Y_3' \mid X_1,X_2) = \frac{\alpha}{2} - \log \alpha$, since $X_1 \Indep Y_3'$. Additionally, it is clear that $h(Y_2 \mid Y_3' ) \ge h(Y_2\mid X_1,X_2, Y_3' )$ and $h(Y_1\mid Y_3', Y_2) \ge h(Y_1\mid X_1,X_2, Y_3', Y_2)$ and thus $I(\bm{X} ;\bm{Y}) \ge \frac{\alpha}{2} - \log \alpha$.

\subsection{Hyperparameter}
\label{apx:params}
For \ourmethod, we use projections with $\lambda = 1/d$ number of dimensions, $T=300$ trees, and $\sqrt{2n}$ minimum number of points in a node to consider it for splitting. These parameters are suggested by the authors of geodesic forests~\cite{madhyastha:20:gf}.

In case of \lnn, we use the suggested setting, which is $k=30$ neighbours for MI estimation and evaluate $k'=3$, as well as $k'=5$, for bandwidth prediction for the kernel density estimation.

For \gknn and \ksg we set the neighbourhood size as specified in the main paper.

\subsection{Additional Experiments}
\label{apx:experiments}

Next, we supplement the runtimes for the experiment on linear uniform data with increasing sample size and provide the results for the experiments on high-dimensional data.

\paragraph*{Time comparison}

We provide a time comparison of all methods in Fig.~\ref{fig:runtime-n100} for 10 repetitions. We observe that \ourmethod only takes $10$ times longer thank \ksg, independent of number of samples.

\begin{figure}[t]
	\begin{minipage}[t]{\linewidth}
	\centering
	\includegraphics[]{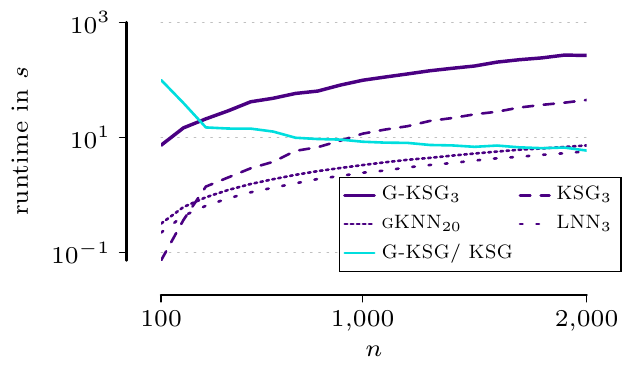}
	\end{minipage}%
	\caption{Shown is the runtime in seconds for \ourmethod, \ksg, \gknn and \lnn for the experiment on sample efficiency in Sec~6.1. In addition, we show the quotient of the runtime of \ourmethod divided by the runtime of \ksg. For larger sample sizes ($n \ge 1000$), \ksg is faster by a factor smaller than $10$.}
	\label{fig:runtime-n100}
\end{figure}

\paragraph*{High dimensional results}
We provide the results for high dimensional data for the uniform linear data and \texttt{Sphere} across 3 repetitions in Fig.~\ref{fig:results-high-dimensional}.
We varied the number of independent dimensions in $\{0, 100,200,...,600\}$, splitting equally between $X$ and $Y$. Both \gknn as well as \lnn were not able to compute the MI estimate due to numerical issues even for as few as 50 dimensions.

\begin{figure}[t]
	\begin{minipage}[t]{\linewidth}
	\centering
	\includegraphics[]{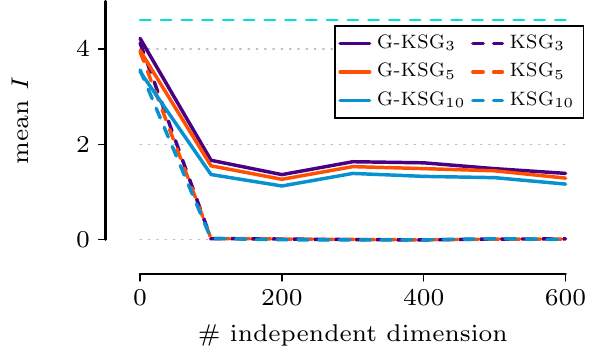}
	\end{minipage}%
	\linebreak
	\begin{minipage}[t]{\linewidth}
	\centering
	\includegraphics[]{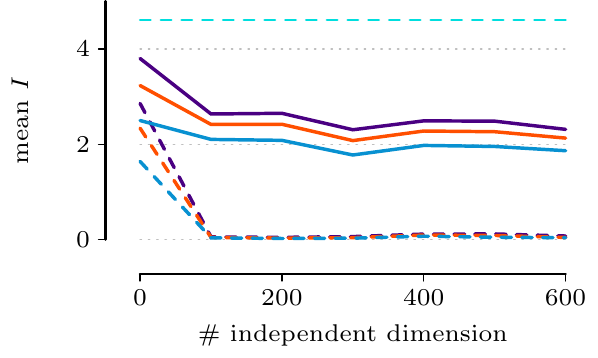}
	\end{minipage}%
	\caption{Mutual information estimates of \ourmethod and \ksg $k=\{3,5,10\}$ on linear data with uniform source (top) and sphere data (bottom) with an increasing number of noise dimensions up to the high-dimensional setting ($d>n=500$). The dashed cyan line is a lower bound on the ground truth MI.}
	\label{fig:results-high-dimensional}
\end{figure}
\fi

\end{document}